\DeclareMathOperator*{\argmin}{arg\;min}
\newtheorem{mydef}{Definition}
\newtheorem{assumption}{Assumption}
\begin{document}

\begin{frontmatter}

\title{Extending the Best Linear Approximation Framework to the Process Noise Case} 

\thanks[footnoteinfo]{The corresponding author is M.~Schoukens (m.schoukens@tue.nl).}

\author{Maarten Schoukens$^1$, Rik Pintelon$^2$, Tadeusz P. Dobrowiecki$^3$, Johan Schoukens$^2$} 
\address{$^1$ Eindhoven University of Technology, Control Systems Group,  Eindhoven, The Netherlands}  
\address{$^2$ Vrije Universiteit Brussel, Dept. ELEC, Pleinlaan 2, B-1050 Brussels, Belgium}  
\address{$^3$ Budapest University of Technology and Economics, Department of Measurement and Information Systems, H-1117 Budapest, Hungary}  
          
\begin{keyword}                           
System Identification, Nonlinear Systems, Best Linear Approximation, Process Noise
\end{keyword}                             

\begin{abstract}                          
	The Best Linear Approximation (BLA) framework has already proven to be a valuable tool to analyze nonlinear systems and to start the nonlinear modeling process. The existing BLA framework is limited to systems with additive (colored) noise at the output. Such a noise framework is a simplified representation of reality. Process noise can play an important role in many real-life applications.
	
	This paper generalizes the Best Linear Approximation framework to account also for the process noise, both for the open-loop and the closed-loop setting, and shows that the most important properties of the existing BLA framework remain valid. The impact of the process noise contributions on the robust BLA estimation method is also analyzed.
\end{abstract}

\end{frontmatter}

\section{Introduction}
	A linear approximate model of a nonlinear system often offers valuable insight into the linear (but also nonlinear) behavior of that system. The Best Linear Approximation (BLA) framework described in \citep{Schoukens1998,Pintelon2002,Enqvist2005,Enqvist2005a,Schoukens2009,Pintelon2012} offers such a well understood and valuable approximation framework for a wide class of practically important signals and systems (see in detail in Section~\ref{sec:BlaOE}). The BLA framework is often used to analyze how nonlinearly a system behaves (see for instance \citep{Vaes2015} for mechanical systems and \citep{Vlaar2017} for biomechanical systems), to guide the user to select a good nonlinear model structure \citep{Schoukens2015}, to obtain linear models in the presence of nonlinearities \citep{Pintelon2012}, but also to start the estimation process of a nonlinear model \citep{Paduart2010,SchoukensM2017c,SchoukensM2017b}. A recent overview of the BLA framework, and its use in practical applications (e.g. ground vibration testing, combustion engine, robotics and electronics), is provided in \citep{Schoukens2016}.
	
	The BLA framework was initially defined for systems operating in open loop and with additive (colored) noise present at the output only. The extension towards the closed-loop setting has been made in \citep{Pintelon2012a,Pintelon2013}. The extension of the BLA framework to include process noise is the subject of this paper.
	
	Considering additive (colored) noise at the output only is a simplified representation of reality. This simplification can lead to biased nonlinear model estimates when other noise sources are present, located at other positions inside the system, e.g. process noise passing through a nonlinear subsystem \citep{Hagenblad2008}. A more realistic noise framework can be obtained by introducing multiple noise sources, or by placing the noise source at another location in the model structure. It is important to offer the user a theoretical framework with which (s)he can analyze the influence of process noise on the system, and whether or not it is important to include process noise in the nonlinear modeling step. An extended BLA framework could offer this theoretical insight. An initial step towards a generalized BLA analysis is made in \citep{Giordano2016} where the BLA of a Wiener-Hammerstein system is analyzed in the presence of process noise.
	
	This paper introduces first the classical open-loop BLA framework with additive noise at the system output only (Section~\ref{sec:BlaOE}). The extension towards systems with process noise is presented in Section~\ref{sec:BlaPN}. It is shown in Section~\ref{sec:FeedbackPN} that this extension is also valid for systems operating in closed loop. The impact of the process noise contributions on the robust BLA estimation method is analyzed in Section~\ref{sec:Estimation}. Finally, the proposed process noise BLA framework is illustrated in Section~\ref{sec:Simulation}. A discrete-time setting is used throughout the paper. However, all derivations and proofs can easily be generalized to the continuous-time setting.
	
\section{Best Linear Approximation: Additive Noise at the System Output} \label{sec:BlaOE}	
		
		\begin{figure}
			\centering
				\includegraphics[width=0.55\columnwidth]{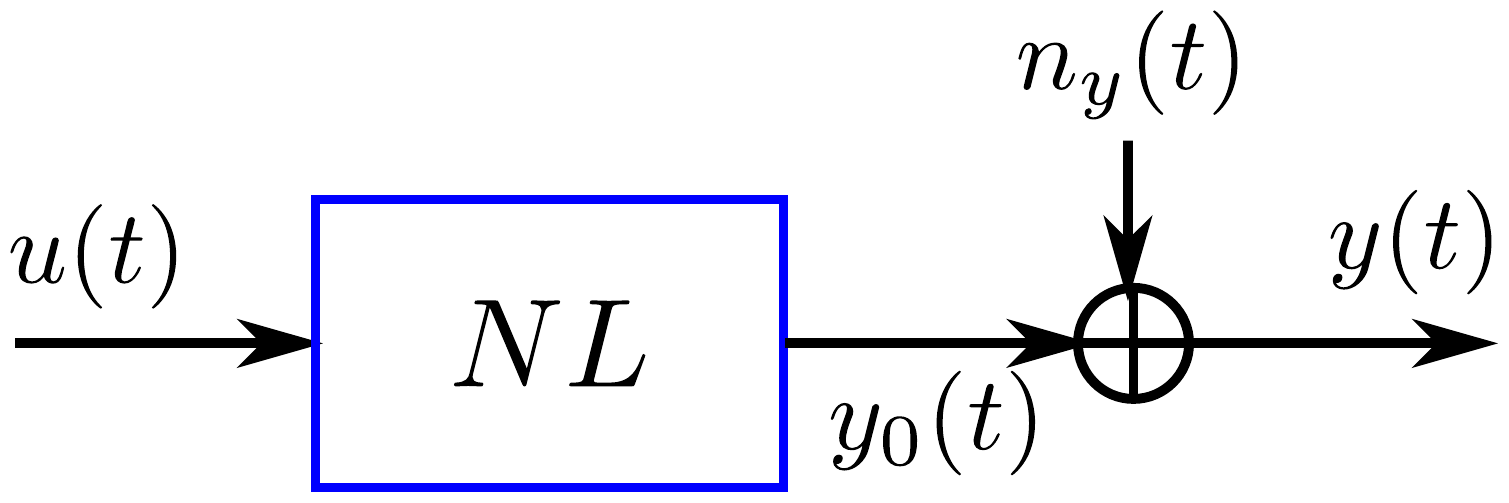}
			\caption{A dynamic nonlinear system with zero-mean (colored) additive noise $n_y(t)$ at the output only. The input excitation $u(t)$ belongs to the signal class $\mathbb{U}$.}
			\label{fig:NonlinearOutputNoise}
		\end{figure}
		
		\subsection{Signal Class}
			This paper assumes the input signal $u(t)$ (see Figure~\ref{fig:NonlinearOutputNoise}) to belong to a generalization of the Gaussian signal class: the Riemann equivalence class of asymptotically normally distributed excitation signals \citep{Schoukens2009,Pintelon2012}. Note that the BLA can also be defined for other signal classes. This choice has been made here to follow the framework defined in \citep{Pintelon2012}, and to use the same class of systems. However, many of the properties that are derived in this paper are not limited to the chosen input signal class. The impact of the input signal class is studied in \citep{Wong2012}.
			
			\begin{mydef} \label{def:Gaussian}
				Riemann equivalence class of asymptotically normally distributed excitation signals $\mathbb{U}$.
				Consider a signal $u(t)$ with a power spectrum $S_{UU}(j\omega)$. $S_{UU}(j\omega)$ is piecewise continuous, with a finite number of discontinuities. A random signal $u(t)$ belongs to the Riemann equivalence class if it obeys any of the following statements:
				\begin{enumerate}
					\item $u(t)$ is a Gaussian process with power spectrum $S_{UU}(j\omega)$.
					\item $u(t)$ is a random multisine or random phase multisine \citep{Pintelon2012} such that:
				\end{enumerate}					\vspace*{-\baselineskip}
				\begin{align}
					\frac{1}{N} \sum_{k = k_1}^{k_2} E\left\{ \left| U\left( j\omega_k \right) \right|^2 \right\} &= \frac{1}{2 \pi} \int_{\omega_{k_1}}^{\omega_{k_2}} S_{UU}\left( j \nu \right) d\nu + O\left( \frac{1}{N} \right), \nonumber
				\end{align}
				with $\omega_k = k \frac{2 \pi f_s}{N}$, $k \in \mathbb{N}$, $0\leq\omega_{k_1}<\omega_{k_2}<\pi f_s$, $f_s$ is the sampling frequency.

			\end{mydef}
				
			A random phase multisine $u(t)$ is a periodic signal with period length $\frac{N}{f_s}$, where $N$ is the number of samples in one period, defined in \citep{Pintelon2012} as: 
				\begin{align}
					u(t) &= \frac{1}{\sqrt{N}}\sum_{k=0}^{N/2-1}2U_{k}\cos(2\pi k\frac{t}{N}+\varphi_{k}).
				\end{align}
			The phases $\varphi_{k}$ are random variables that are independent over the frequency and are a realization of a random process on $[0, 2\pi[$, such that $E\{e^{j\varphi_{k}}\}=0$.  For instance, the random phases can be uniformly distributed between $[0, 2\pi[$. The (real) amplitude $U_{k}$ is set in a deterministic way by the user. $U_k$ is uniformly bounded by $M_U$ ($0 \leq U_k \leq M_U < \infty$).
			
			Note that the Riemann equivalence class of asymptotically normally distributed excitation signals can, in most cases, easily be tuned to fit the application at hand, without much additional processing or hardware. The random phase multisine signals are periodic excitation signals, offering the opportunity of leakage-free measurements with a full control on the amplitude spectrum.
		
		\subsection{System Class and Noise Framework} \label{sec:SystemClass}
			It is assumed that the nonlinear system output can be represented arbitrarily well in the least-squares sense by a fading memory Volterra kernel representation \citep{Schetzen1980,Boyd1985}. This system class contains, for instance, systems with a hard saturation nonlinearity, but does not contain bifurcating and chaotic systems. The generality of this system class is discussed in detail by \citep{Boyd1985}.
			
			The Volterra model output consists of the sum of the outputs of the kernels of different degree. The output of a Volterra kernel of degree $\alpha$ is given by in the time domain by:
			\begin{align}
				y_{\alpha}(t) &= V_{\alpha}(u(t)), \label{eq:ya} \\
					 &= \sum_{k_1,\ldots,k_\alpha=0}^{N_k} h_{\alpha}(k_1,\ldots,k_\alpha) u(t-k_1)\ldots u(t-k_\alpha), \nonumber
			\end{align}
			which results in the following frequency domain representation:
			\begin{align}
							Y_\alpha(j\omega_k) &= \frac{1}{N^{\frac{\alpha-1}{2}}} \sum_{k_1,k_2,\ldots,k_{\alpha-1}=-N/2+1}^{N/2-1} H^{\alpha}_{L_k,k_1,k_2,\ldots,k_{\alpha-1}} \nonumber \\
													 & \quad \quad \quad U(j\omega_{L_k})U(j\omega_{k_1})\ldots U(j\omega_{k_{\alpha-1}}),
						\end{align}
			where $L_k = k - k_1 - k_2 - \ldots - k_{\alpha-1}$. $H^{\alpha}_{L_k,k_1,k_2,\ldots,k_{\alpha-1}}$ is a symmetrized frequency domain representation of the Volterra kernel $h_{\alpha}$ of degree $\alpha$ \citep{Schetzen1980,Pintelon2012}. $Y_\alpha(j\omega_k)$ is obtained as the Discrete Fourier Transform (DFT) of $y_\alpha(t)$:
			\begin{align}
				Y_\alpha(j\omega_k) = \frac{1}{\sqrt{N}} \sum_{t=0}^{N-1} y_\alpha(t) e^{-j2\pi t k /N}, \\
				y_\alpha(t) = \frac{1}{\sqrt{N}} \sum_{k=0}^{N-1} Y_\alpha(j\omega_k) e^{j2\pi t k/N}.
			\end{align}
			
			\begin{mydef} \label{def:System}
				$\mathbb{S}$ is the class of nonlinear systems such that, when excited by a random phase multisine:
				\begin{align}
					\exists \; C_1 < \infty, \text{s.t.} \; \sum_{\alpha=1}^{\infty} M_{G^\alpha}M^{\alpha}_{U} \leq C_1,
				\end{align}
				with $M_{G^\alpha} = \textit{max}\left|H^{\alpha}_{L_k,k_1,k_2,\ldots,k_{\alpha-1}}\right|$, where the maximum is taken over the indices $L_k,k_1,k_2,\ldots,k_{\alpha-1}$.
			\end{mydef}
			Definition \ref{def:System} postulates the existence of a uniformly bounded fading memory Volterra series whose output converges in mean square sense to the output of the nonlinear system belonging to the system class $\mathbb{S}$ (see \citep{Pintelon2012} for more detail) when the degree $\alpha$ grows to infinity.
		
			\begin{assumption} \label{ass:Noise}
				\textbf{Output noise framework:}
				An additive, colored zero-mean noise source $n_y(t)$ with a finite variance $\sigma_{n_y}^2$ is present at the output of the system (see Figure~\ref{fig:NonlinearOutputNoise}):
				\begin{align}
					y(t) = y_0(t) + n_y(t).
				\end{align}
				This noise $n_y(t)$ is assumed to be independent of the known input $u(t)$. $y(t)$ is the actual measured output signal and a subscript $0$ denotes the exact (noise-free) value.
			\end{assumption}

	\subsection{Definition of the Best Linear Approximation}
			
		The BLA model of a nonlinear system belonging to system class $\mathbb{S}$ with zero-mean (colored) additive noise at the system output only (see Figure~\ref{fig:NonlinearOutputNoise}) is a linear time-invariant (LTI) approximation of the behavior of that system. The BLA is defined in \citep{Schoukens1998,Pintelon2002,Schoukens2009,Pintelon2012} as:
			\begin{align}
		 	  G_{bla}(q) 		&= \underset{G(q)}{\argmin} \: E_{u,n_y}\left\{ \left| \tilde{y}(t) - G(q)\tilde{u}(t) \right|^{2} \right\}, \label{eq:BLA} \\
				\tilde{u}(t)	&= u(t) - E_{u}\left\{ u(t) \right\}, \label{eq:uTilde} \\
		 	  \tilde{y}(t)	&= y(t) - E_{u,n_y}\left\{ y(t) \right\}, \label{eq:yTilde} 
			\end{align}
	 	where $E_{u,n_y}\left\{.\right\}$ denotes the expected value operator taken w.r.t. the random variations due to the input $u(t)$ and the output noise $n_y(t)$ and $G(q)$ belongs to the set of all possible LTI systems.	This definition of the BLA is equivalent to the definition of the linear time-invariant second-order equivalent model defined in \citep{Ljung2001, Enqvist2005, Enqvist2005a} when the stability and causality restrictions imposed there are omitted.
	 	
	 	It is shown that the BLA is given by \citep{Enqvist2005, Enqvist2005a, Pintelon2012}:
	 	\begin{align}
		 	G_{bla}(j\omega) = \frac{S_{YU}(j\omega)}{S_{UU}(j\omega)}, 
	 	\end{align}
	 	where $S_{YU}(j\omega)$ is the cross-power spectrum of $u(t)$ and $y(t)$ and $S_{UU}(j\omega)$ is the autopower spectrum of $u(t)$. Hence, the existence of $G_{bla}(j\omega)$ is guaranteed if $S_{UU}(j\omega)$ and $S_{YU}(j\omega)$ exist, and $S_{UU}(j\omega) \neq 0$. The existence of $S_{YU}(j\omega)$ is guaranteed by the chosen signal (Definition~\ref{def:Gaussian}) and system class (Definition~\ref{def:System}) \citep{Pintelon2012,Schetzen1980}. The BLA is not defined at the frequencies where $S_{UU}(j\omega) = 0$ \citep{Enqvist2005,Pintelon2012}. A (possible infinite order, noncausal) transfer function or impulse response representation $G_{bla}(q)$ can be obtained by fitting $G_{bla}(j\omega)$ at the excited frequencies.
		
		\begin{figure}
			\centering
				\includegraphics[width=0.8\columnwidth]{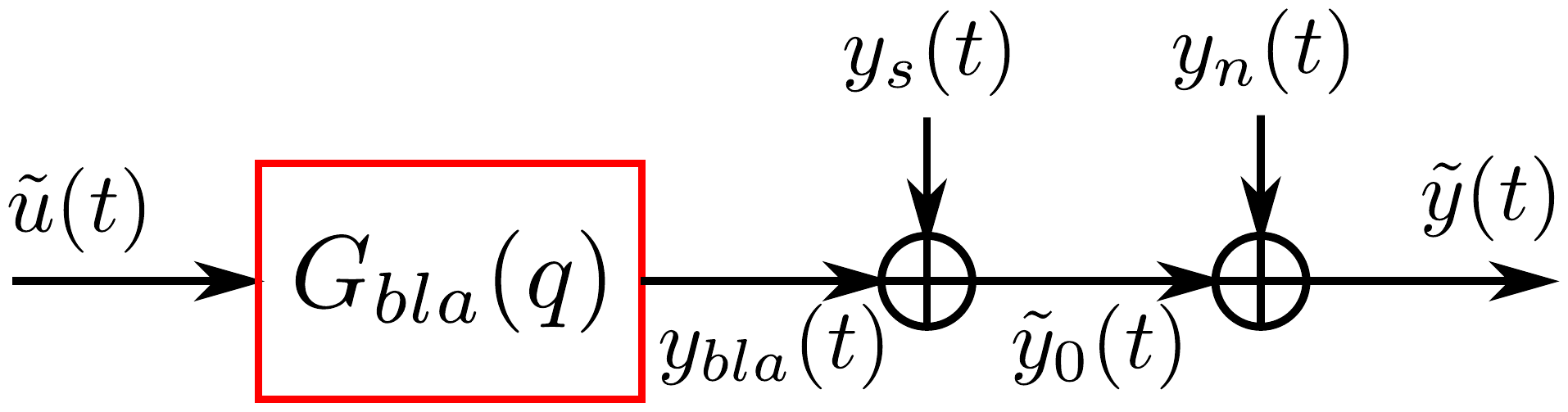}
			\caption{The BLA of a nonlinear system for a given class of excitation signals $\mathbb{U}$ consists of the resulting LTI model $G_{bla}(q)$, the stochastic nonlinear contribution $y_{s}(t)$, and the noise distortion $y_{n}(t)$.}
			\label{fig:BestLinearApprox}
		\end{figure}
		
		Three constituents of the BLA framework can be defined (see Figure~\ref{fig:BestLinearApprox}): the BLA $G_{bla}(q)$ itself, the stochastic nonlinear contribution $y_s(t)$ and the noise contribution $y_n(t)$. The output residuals $y_{t}(t)$ represent the total distortion that is present at the output of the system. The total distortion can be split in two contributions based on their nature as is depicted in Figure~\ref{fig:BestLinearApprox}. The stochastic nonlinear contribution $y_{s}(t)$ represents the unmodeled nonlinear contributions, while the noise contribution ${y}_n(t)$ is the additive noise that is present at the system output:
			\begin{align}
				y_{t}(t)   &= \tilde{y}(t) - G_{bla}(q)\tilde{u}(t) = y_s(t) + y_n(t), \\
		 	  y_{s}(t)   &= \tilde{y}_{0}(t) - G_{bla}(q)\tilde{u}(t), \\
		 	  y_{n}(t)   &= \tilde{y}(t) - \tilde{y}_{0}(t) = n_y(t),
			\end{align}
		where $\tilde{y}_{0}(t) = E_{n_y}\{ \tilde{y}(t)\}$ is the unknown zero-mean noiseless output. The nonlinear distortion $y_{s}(t)$ is linearly uncorrelated with the input $\tilde{u}(t)$ ($E_u\{y_s(t)\tilde{u}(\tau)\} =0 \: \forall \: t,\tau$). The nonlinear distortion $y_{s}(t)$ is not independent of the input $\tilde{u}(t)$ however \citep{Pintelon2012}. The noise distortion $y_{n}(t)$ on the contrary is both uncorrelated with the input and independent of the input $\tilde{u}(t)$. All three signals $y_{bla}(t)$, $y_{s}(t)$, $y_{n}(t)$ are zero-mean.

\section{Best Linear Approximation: Process Noise Extension} \label{sec:BlaPN}

	\subsection{Considered System Class and Noise Framework} \label{sec:VoltPN}
		
		The considered excitation signal class and the output noise assumptions are unchanged with respect to Section~\ref{sec:BlaOE}. The considered system class is extended here to include process noise, and the necessary assumptions on the process noise are formulated.
		
		\begin{figure}
			\centering
				\includegraphics[width=0.55\columnwidth]{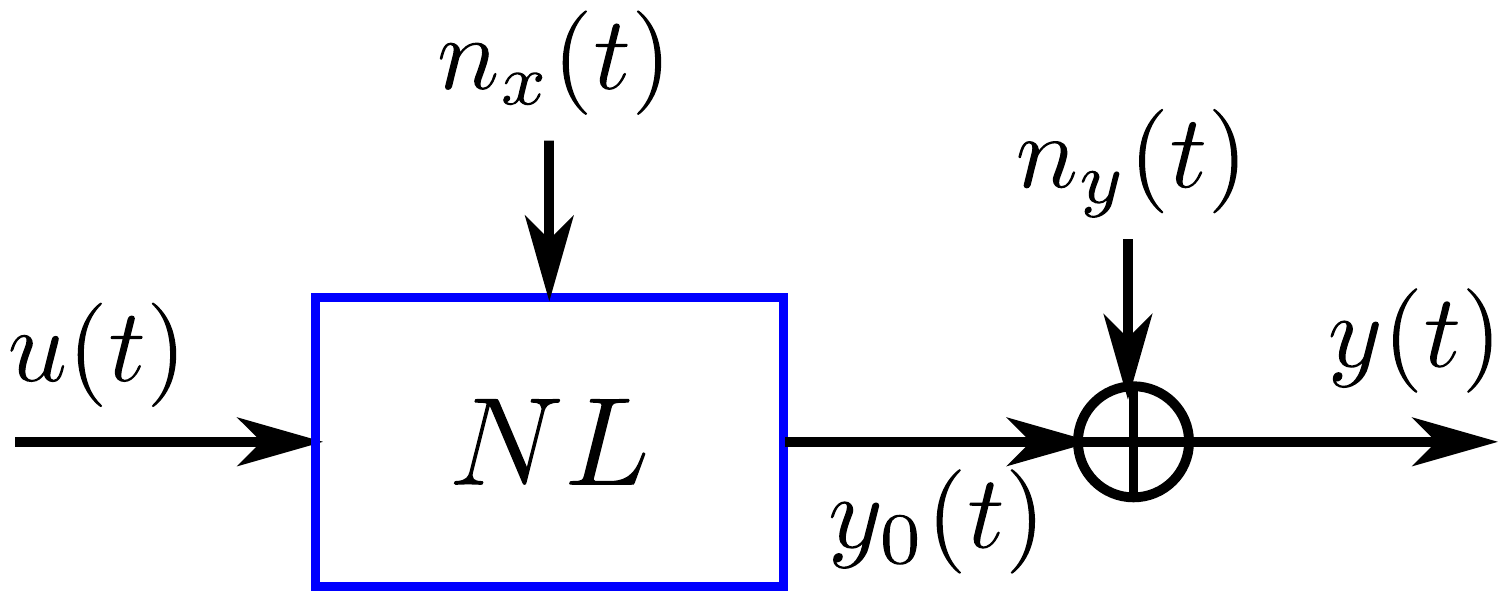}
			\caption{A dynamic nonlinear system with zero-mean (colored) process noise $n_x(t)$ and zero-mean (colored) additive output noise $n_y(t)$. The input excitation $u(t)$ belongs to the signal class $\mathbb{U}$.}
			\label{fig:NonlinearProcessNoise}
		\end{figure}
		
		It is assumed in Section~\ref{sec:SystemClass} that the underlying nonlinear system is a Volterra system. Here we extend the standard single-input single-output Volterra kernel representation to a dual-input single-output representation where one of the inputs is excited by the process noise (see Figure~\ref{fig:NonlinearProcessNoise}). The modeling of the process noise as the second input to the Volterra system describing the measured nonlinear system opens considerable opportunities for a unified treatment of many process noise configurations. The output of a dual-input m,n-th order Volterra kernel with process noise is given by:
		\begin{align}
			y_{m,n}(t) &= V_{m,n}(u(t),n_x(t)), \label{eq:ymn} \\
						 &= \sum_{k_1,\ldots,k_m=0}^{N_k} \sum_{j_1,\ldots,j_n=0}^{N_j} h_{m,n}(k_1,\ldots,k_m,j_1,\ldots,j_n) \nonumber \\
						 & \: \: u(t-k_1)\ldots u(t-k_m) n_x(t-j_1) \ldots n_x(t-j_n), \nonumber
		\end{align}
		where $N_k$ and $N_j$ are the numbers of taps considered for the input $u$ and the process noise $n_x$ respectively.
		
		\begin{assumption} \label{ass:ProcNoise}
			\textbf{Process noise framework:}			
			A colored zero-mean noise source $n_x(t)$ is present as one of the inputs of the dual-input single-output Volterra representation of the nonlinear system.
			\begin{align}
				y_0(t) = \sum_{m=0}^{\infty} \sum_{n=0}^{\infty} V_{m,n}(u(t),n_x(t))
			\end{align}
			The noise $n_x(t)$ is assumed to be independent of the known input $u(t)$ and the output noise $n_y(t)$, its $n$-th order moments are finite $\forall \: n \in \mathbb{N}$.
		\end{assumption}
		
		\begin{thm} \label{theo:Volterra}
			If $y_{m,n}(t) = V_{m,n}(u(t),n_x(t))$, then $\bar{\bar{y}}_{m,n}(t) = \bar{\bar{V}}_{m,n}(u(t))$, where
			\begin{align}
				\bar{\bar{y}}_{m,n}(t) &= E_{n_x}\{ y_{m,n}(t)\}. \\
				\bar{\bar{V}}_{m,n}(u(t)) &= \sum_{k_1,\ldots,k_m=0}^{N_k} \bar{\bar{h}}_{m,n}(k_1,\ldots,k_m) \\
																	& \: \: u(t-k_1)\ldots u(t-k_m), \nonumber \\
				\bar{\bar{h}}_{m,n}(k_1,\ldots,k_m) &= \sum_{j_1,\ldots,j_n=0}^{N_j} h_{m,n}(k_1,\ldots,k_m,j_1,\ldots,j_n) \nonumber \\
																						& \: \: E_{n_x}\{ n_x(t-j_1) \ldots n_x(t-j_n) \} \label{eq:hbarbar}
			\end{align} 
		\end{thm}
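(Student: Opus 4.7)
The plan is to apply the conditional expectation $E_{n_x}\{\cdot\}$ directly to the explicit Volterra-kernel expansion in~\eqref{eq:ymn}, and then to exploit the independence of $n_x(t)$ from $u(t)$ stated in Assumption~\ref{ass:ProcNoise} in order to separate the factors that depend on $n_x$ from those that do not. Effectively, one \emph{integrates out} the process noise and absorbs the resulting joint moments of $n_x$ into a reduced, $u$-only kernel.

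I would proceed in three short steps. First, since the sums in~\eqref{eq:ymn} run over a finite number of taps $N_k$ and $N_j$, and every $n$-th order moment of $n_x$ is finite by Assumption~\ref{ass:ProcNoise}, linearity of expectation lets me pull $E_{n_x}\{\cdot\}$ past the sums and past the deterministic kernel coefficients $h_{m,n}(k_1,\ldots,k_m,j_1,\ldots,j_n)$. Second, because $n_x$ is independent of $u$, the expectation factors multiplicatively: the product $u(t-k_1)\cdots u(t-k_m)$ passes through $E_{n_x}\{\cdot\}$ unchanged, while the noise product produces the joint moment $E_{n_x}\{n_x(t-j_1)\cdots n_x(t-j_n)\}$. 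Third, I collapse the inner sum over $j_1,\ldots,j_n$ together with the kernel coefficients and this joint moment into the reduced kernel $\bar{\bar{h}}_{m,n}(k_1,\ldots,k_m)$ defined in~\eqref{eq:hbarbar}; what remains is then, by inspection, exactly the single-input Volterra functional $\bar{\bar{V}}_{m,n}(u(t))$, which is the claim.

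The argument is essentially mechanical and no genuine obstacle arises. The one point worth stating carefully is that $E_{n_x}\{\cdot\}$ acts on the randomness of $n_x$ alone and leaves the (conditionally deterministic) input samples $u(t-k_i)$ invariant, which is precisely the independence assumption on $n_x$ and $u$. The finite-moment hypothesis on $n_x$ eliminates any integrability concern and ensures that every term in the finite double sum is well defined, so no dominated-convergence-type argument is needed; all interchanges of summation and expectation are justified by elementary linearity on a finite index set. If one additionally wished to emphasize that $\bar{\bar{h}}_{m,n}$ is time-invariant, it suffices to remark that under stationarity of $n_x$ the joint moment $E_{n_x}\{n_x(t-j_1)\cdots n_x(t-j_n)\}$ depends on the lag pattern $(j_1,\ldots,j_n)$ only.
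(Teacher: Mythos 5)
Your proposal is correct and follows essentially the same route as the paper's proof: take $E_{n_x}\{\cdot\}$ of the double-sum expansion~\eqref{eq:ymn}, use the independence of $n_x$ and $u$ to leave the input products untouched while the noise product collapses to its joint moment, and absorb the inner sum into the reduced kernel~\eqref{eq:hbarbar}, with finiteness of the sums guaranteed by the finite-moment assumption. Nothing further is needed.
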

		\begin{proof}
			The signal $\bar{\bar{y}}_{m,n}(t)$ is obtained by taking the expectation of \eqref{eq:ymn} w.r.t. the process noise $n_x$:
			\begin{align}
				\bar{\bar{y}}_{m,n}(t) &= E_{n_x}\{y_{m,n}(t)\} \\
															 &= \sum_{k_1,\ldots,k_m=0}^{N_k} \sum_{j_1,\ldots,j_n=0}^{N_j} h_{m,n}(k_1,\ldots,k_m,j_1,\ldots,j_n) \nonumber \\
						 			 						 & \: \: u(t-k_1)\ldots u(t-k_m) E_{n_x}\{n_x(t-j_1) \ldots n_x(t-j_n)\}, \nonumber
			\end{align}
			where the expectation $E_{n_x}\{n_x(t-j_1) \ldots n_x(t-j_n)\}$ depends on the $n$-th order moment of $n_x(t)$ and its decomposition into pairwise autocorrelations \citep{Schetzen1980}. This can be simplified using eq.~\eqref{eq:hbarbar}:
			\begin{align}
				\bar{\bar{y}}_{m,n}(t) &= \sum_{k_1,\ldots,k_m=0}^{N_k} \bar{\bar{h}}_{m,n}(k_1,\ldots,k_m) \\
															 & \: \: u(t-k_1)\ldots u(t-k_m), \nonumber \\
															 &= \bar{\bar{V}}_{m,n}(u(t)).
			\end{align}
			The sum in eq.~\eqref{eq:hbarbar} is finite since $h_{m,n}(k_1,\ldots,k_m,j_1,\ldots,j_n)$ is finite and it is assumed in Assumption~\ref{ass:ProcNoise} that the $n$-th order moments of the process noise $n_x(t)$ are finite. 
			
		\end{proof}	
		This theorem shows that the relation between $u$ and $y$ is given by a SISO Volterra representation after taking the expectation of $y$ with respect to the random realization of the process noise.
		
		\begin{mydef} \label{def:SystemProcess}
			$\mathbb{S}_p$ is the class of nonlinear systems such that, when excited with a random phase multisine, the following inequality holds:
			\begin{align}
				\exists \; C_1< \infty, \text{s.t.} \; \sum_{m=0}^{\infty}\sum_{n=0}^{\infty} M_{G^{m,n}}M^{m}_{U} \leq C_1,
			\end{align}
			with $M_{G^{m,n}} = \textit{max}\left|\bar{\bar{H}}^{m,n}_{L_k,k_1,k_2,\ldots,k_{m-1}}\right|$, where $\bar{\bar{H}}^{m,n}_{L_k,k_1,k_2,\ldots,k_{m-1}}$ is the symmetrized frequency domain representation of $\bar{\bar{h}}_{m,n}(k_1,\ldots,k_m)$.
		\end{mydef}
		Definition~\ref{def:SystemProcess} is the natural extension of the system class $\mathbb{S}$. Indeed when no process noise is present the system class $\mathbb{S}$ is obtained. The finite moment assumption on the process noise is also similar to the maximum amplitude restriction in Definition~\ref{def:System}.
		
	\subsection{Generalized Definition of the Best Linear Approximation}

		The BLA framework in the presence of process noise is defined in this section. The framework consists of four model components: the BLA $G_{bla}(q)$ itself, the stochastic nonlinear distortion $y_s(t)$ due to the randomized input, the process noise contribution $y_{p}(t)$ due to the process noise, and the output noise contribution $y_n(t)$ as depicted in Figure~\ref{fig:BestLinearApproxPN}. Note that the process noise contribution $y_p$ is a new constituent of the extended BLA framework due to the presence of process noise in the system.
				
		\begin{figure}
			\centering
				\includegraphics[width=0.95\columnwidth]{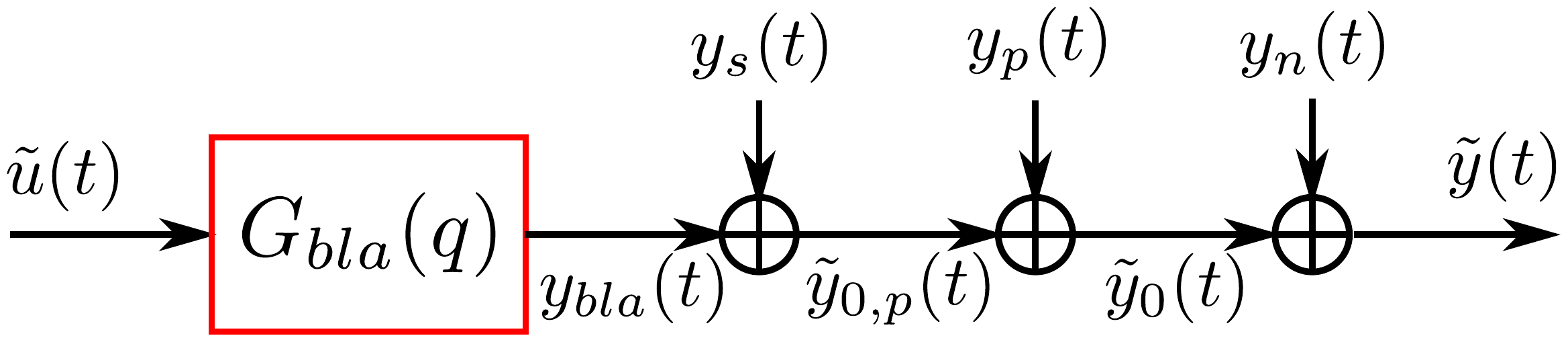}
			\caption{The BLA of a nonlinear system for a given class of excitation signals $\mathbb{U}$ consists of the resulting LTI model $G_{bla}(q)$, the unmodeled nonlinear contribution $y_{s}(t)$, the process noise distortion $y_{p}(t)$ and the output noise distortion $y_{n}(t)$.}
			\label{fig:BestLinearApproxPN}
		\end{figure}
		
		Two important design decisions are made in defining the extended BLA framework:	
		\begin{enumerate}
			\item The BLA $G_{bla}(q)$ and the stochastic nonlinear distortion $y_s$ are defined such that they do not depend on the actual realization of $n_x$ and $n_y$.
			\item The process noise contribution $y_p$ is defined such that it does not depend on the actual realization of the output noise $n_y$. 
		\end{enumerate}

		The BLA and the stochastic distortions $y_s$ are defined next as:
		\begin{align}
			G_{bla}(q) &= \underset{G(q)}{\argmin} \: E_{u,n_y,n_x}\left\{ \left| \tilde{y}(t) - G(q)\tilde{u}(t) \right|^{2} \right\}, \\
			y_{s}(t)   &= \bar{\bar{y}}(t) - y_{bla}(t) = \bar{\bar{y}}(t) - G_{bla}(q)\tilde{u}(t), \label{eq:ys}
		\end{align}
		where $G(q)$ belongs to the set of all possible LTI systems, and $\tilde{u}(t)$ is defined in eq.~\eqref{eq:uTilde}, $y_{bla}(t) = G_{bla}(q)\tilde{u}(t)$, $\bar{\bar{y}}(t)$ is defined as:
		\begin{align}
			\bar{\bar{y}}(t) &= E_{n_x,n_y}\{ \tilde{y}(t)\},
		\end{align}
		$\tilde{u}(t)$ and $\tilde{y}(t)$ are now defined as:
		\begin{align}
			\tilde{u}(t) &= u(t) - E_{u}\{ u(t)\}, \\
			\tilde{y}(t) &= y(t) - E_{u,n_x,n_y}\{ y(t)\}.
		\end{align}
 
 		The process noise contribution $y_p$ is defined as:
		\begin{align}
			y_p(t) &= \bar{y}(t) - \bar{\bar{y}}(t), \label{eq:procDef} \\
						 &= \tilde{y}_0(t) - G_{bla}(q)\tilde{u}(t) - y_{s}(t),
		\end{align}
		where $\bar{y}(t)$ is defined as:
		\begin{align}
			\bar{y}(t) &= E_{n_y}\{ \tilde{y}(t)\}, \\
								 &= \tilde{y}_0(t).
		\end{align}
		
		The output noise contribution $y_n$, the final constituent of the BLA framework, remains to be defined:
		\begin{align}
			y_n(t) &= \tilde{y}(t) - \bar{y}(t), \label{eq:outDef} \\
						 &= \tilde{y}(t) - \tilde{y}_0(t), \\
						 &= \tilde{y}(t) - G_{bla}(q)\tilde{u}(t) - y_{s}(t) - y_{p}(t), \\
						 &= n_y(t).
		\end{align}
		Note that, by definition, the noise contribution $y_{n}(t) = n_y(t)$. However, other choices for the definition of $y_{n}(t)$ may be made, resulting in different expressions of the noise contribution. This is highlighted in the last paragraph of Section~\ref{sec:Simulation}.
		
		To conclude we have that the system output $\tilde{y}(t)$, as shown in Figure~\ref{fig:BestLinearApproxPN}, is given by:
		\begin{align}
			\tilde{y}(t) &= y_{bla}(t) + y_s(t) \quad &\rightarrow & \: E_{n_x,n_y}\{ \tilde{y}(t)\} & \\
					 				 &\: \: + y_p(t)  					\quad &\rightarrow & \: E_{n_y}\{ \tilde{y}(t)\} - E_{n_x,n_y}\{ \tilde{y}(t)\} &\nonumber \\
					 				 &\: \: + y_n(t)  					\quad &\rightarrow & \: \tilde{y}(t) - E_{n_y}\{ \tilde{y}(t)\} &\nonumber
		\end{align}
		Section~\ref{sec:Estimation} presents how the BLA, and the variances of the signals $y_p(t)+y_n(t)$ and $y_s(t)+y_p(t)+y_n(t)$ can be estimated using the robust BLA estimation approach.
	
	\subsection{Properties of the BLA Model Components} \label{sec:properties}
		This section shows that the stochastic nonlinear distortion $y_s(t)$ and the process noise contribution $y_p(t)$ are zero-mean, and linearly uncorrelated with - but not independent of - the input $\tilde{u}(t)$.
		
		\begin{thm} \label{theo:zeromeanNL} Properties of the stochastic nonlinear distortion and the process noise contribution.  
			\begin{itemize}
			\item The stochastic nonlinear distortion $y_s(t)$ has zero-mean and is linearly uncorrelated with $\tilde{u}(t)$:
						\begin{align}
							E_u\{y_s(t)\} &= 0, \\
							E_u\{y_s(t)\tilde{u}(\tau)\} &= 0 \quad \quad \forall \: t,\tau.
						\end{align}
			\item The process noise contribution $y_p(t)$ has zero-mean and is linearly uncorrelated with $\tilde{u}(t)$:
						\begin{align}
							E_{n_x}\{y_p(t)\} &= 0, \\
							E_{n_x}\{y_p(t)\tilde{u}(\tau)\} &= 0 \quad \quad \forall \: t,\tau.
						\end{align}
			\item The sum of the process noise contribution $y_p(t)$ and the stochastic nonlinear contribution is uncorrelated with $\tilde{u}(t)$:
						\begin{align}
							E_{u,n_x}\{(y_p(t)+y_s(t))\tilde{u}(\tau)\} = 0 \quad \quad \forall \: t,\tau
						\end{align}
			\end{itemize}
		\end{thm}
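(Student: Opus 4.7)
The plan is to reduce all three claims to the classical orthogonality argument for the scalar BLA after using Theorem~\ref{theo:Volterra} to ``integrate out'' the process noise.

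First I would handle the $y_s$ statements. By Theorem~\ref{theo:Volterra} together with the independence of $n_y$ from $(u,n_x)$ and $E\{n_y\}=0$, the signal $\bar{\bar{y}}(t)=E_{n_x,n_y}\{\tilde{y}(t)\}$ is a SISO Volterra functional of $u$ alone, and it is zero-mean in $u$ because $E_{u,n_x,n_y}\{\tilde{y}(t)\}=0$ by the definition of $\tilde{y}$. Writing $\tilde{y}-G\tilde{u}=(\bar{\bar{y}}-G\tilde{u})+(\tilde{y}-\bar{\bar{y}})$ and expanding $E_{u,n_x,n_y}\{|\cdot|^{2}\}$, the cross term vanishes because $E_{n_x,n_y}\{\tilde{y}-\bar{\bar{y}}\}=0$, so the cost splits as $E_u\{|\bar{\bar{y}}-G\tilde{u}|^2\}$ plus a $G$-independent remainder. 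Hence $G_{bla}$ coincides with the BLA of the purely input-driven SISO system with output $\bar{\bar{y}}$, and the classical orthogonality argument already invoked in Section~\ref{sec:BlaOE} immediately delivers $E_u\{y_s(t)\}=0$ and $E_u\{y_s(t)\tilde{u}(\tau)\}=0$.

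For the $y_p$ claims, I would rewrite
\begin{align*}
y_p(t)=\bar{y}(t)-\bar{\bar{y}}(t)=\tilde{y}_0(t)-E_{n_x}\{\tilde{y}_0(t)\},
\end{align*}
using $\bar{y}=\tilde{y}_0$ and the fact that the zero-mean $n_y$ drops out of $\bar{\bar{y}}=E_{n_x}\{\bar{y}\}$. By construction $E_{n_x}\{y_p(t)\}=0$ pointwise in $u$, giving the zero-mean statement. Because $\tilde{u}$ is independent of $n_x$ by Assumption~\ref{ass:ProcNoise}, $\tilde{u}(\tau)$ factors out of the $n_x$-expectation, so $E_{n_x}\{y_p(t)\tilde{u}(\tau)\}=\tilde{u}(\tau)\,E_{n_x}\{y_p(t)\}=0$. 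The third, joint claim then follows by the tower property $E_{u,n_x}=E_u\,E_{n_x}$: the $y_s$ summand reduces to $E_u\{y_s(t)\tilde{u}(\tau)\}=0$ since $y_s$ is a function of $u$ alone, and the $y_p$ summand becomes $E_u\{0\}=0$ by the previous step.

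The main obstacle is the reduction in the first paragraph: one must carefully justify that integrating $(n_x,n_y)$ out of the BLA cost functional yields the same minimizer as treating $\bar{\bar{y}}$ directly as the SISO reference output. This relies on Theorem~\ref{theo:Volterra} for the well-definedness of $\bar{\bar{y}}$ as a (convergent) Volterra functional of $u$, and on the independence structure of Assumptions~\ref{ass:Noise} and~\ref{ass:ProcNoise} to kill the cross term in the cost. Once that reduction is in place, the remaining steps are routine iterated-expectation bookkeeping.
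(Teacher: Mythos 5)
Your proof is correct and follows essentially the same route as the paper: integrate out $(n_x,n_y)$ via Theorem~\ref{theo:Volterra} so that $\bar{\bar{y}}$ becomes a zero-mean SISO Volterra functional of $u$, invoke the classical least-squares orthogonality for the residual $y_s=\bar{\bar{y}}-G_{bla}(q)\tilde{u}$, and handle $y_p$ by factoring $\tilde{u}(\tau)$ out of the $n_x$-expectation, with the joint claim following by iterated expectations. The only place you go beyond the paper is the explicit cost-splitting computation showing that the cross term vanishes and hence that $G_{bla}$ also minimizes $E_u\{|\bar{\bar{y}}-G\tilde{u}|^2\}$ --- the paper simply asserts that $y_s$ is the residual of a least-squares fit between $\bar{\bar{y}}$ and $\tilde{u}$, so your argument supplies a justification the paper leaves implicit.
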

		\begin{proof}
			The expected value of the stochastic nonlinear distortion $y_s(t)$ with respect to the input signal realization is given by:
			\begin{align}
				E_u\{y_s(t)\} = E_u\{\bar{\bar{y}}(t)\}-E_u\{G_{bla}(q)\tilde{u}(t)\}. 
			\end{align}
			The second term is equal to zero since $\tilde{u}(t)$ is zero-mean by construction. The first term is given by:
			\begin{align}
				E_u\{\bar{\bar{y}}(t)\} &= E_{u,n_x,n_y}\{ \tilde{y}(t)\},
			\end{align}
			where $E_{u}\{ \tilde{y}(t)\}$ is zero by construction. It hence follows directly from the definition of $y_s(t)$ that $E_u\{y_s(t)\} = 0$.

			The stochastic nonlinear distortion $y_s(t)$ is the residual of a linear least squares fit of a linear time-invariant model between $\bar{\bar{y}}(t)$ and $\tilde{u}(t)$ (see eq.~\eqref{eq:ys}). Hence $y_s(t)$ is linearly uncorrelated with $\tilde{u}(t)$ by construction in the absence of model errors ($G(q)$ belongs to the set of all possible LTI systems).

			The process noise contribution $y_p(t)$ is given by:
			\begin{align}
				y_p(t) &= \bar{y}(t) - \bar{\bar{y}}(t) \nonumber \\
				       &= E_{n_y}\{\tilde{y}(t)\} - E_{n_x,n_y}\{\tilde{y}(t)\}.
			\end{align}
			The expected value $E_{n_x}\{y_p(t)\}$ taken over the process noise realization is thus given by:
			\begin{align}
				E_{n_x}\{\bar{y}(t)\} &= E_{n_x,n_y}\{\tilde{y}(t)\} - E_{n_x,n_y}\{\tilde{y}(t)\} = 0.
			\end{align}

			The input $\tilde{u}$ does not depend on the process noise $n_x$ by construction and $E_{n_x}\{y_p(t)\}=0$ as is shown above. This results in:
			\begin{align}
				E_{n_x}\{y_p(t)\tilde{u}(\tau)\} &= E_{n_x}\{y_p(t)\}\tilde{u}(\tau) = 0. \label{eq:uncorPN}
			\end{align}
			
			It follows directly from the proof above that $E_{u,n_x}\{(y_p(t)+y_s(t))\tilde{u}(\tau)\} = 0 \: \forall \: t,\tau$.
		\end{proof}

		Many other properties of the BLA and its constituents can be proven based upon the assumption that the underlying nonlinear system is a Volterra system, and that the signal belongs to the Riemann equivalence class of asymptotically normally distributed excitation signals \citep{Pintelon2012}. Section~\ref{sec:VoltPN} showed that if the relationship between $\tilde{u}$, $n_x$  and $\tilde{y}$ is given by a Volterra system, then the relationship between $\tilde{u}$ and $\bar{\bar{y}}$ is also given by a Volterra system. As a consequence, the theoretical properties of $y_s$ and $G_{bla}$ proven for the basic case (see \citep{Pintelon2012} for an overview and a detailed analysis) still hold.
	
\section{The Best Linear Approximation in Feedback} \label{sec:FeedbackPN}

	\begin{figure}
		\centering
			\includegraphics[width=1\columnwidth]{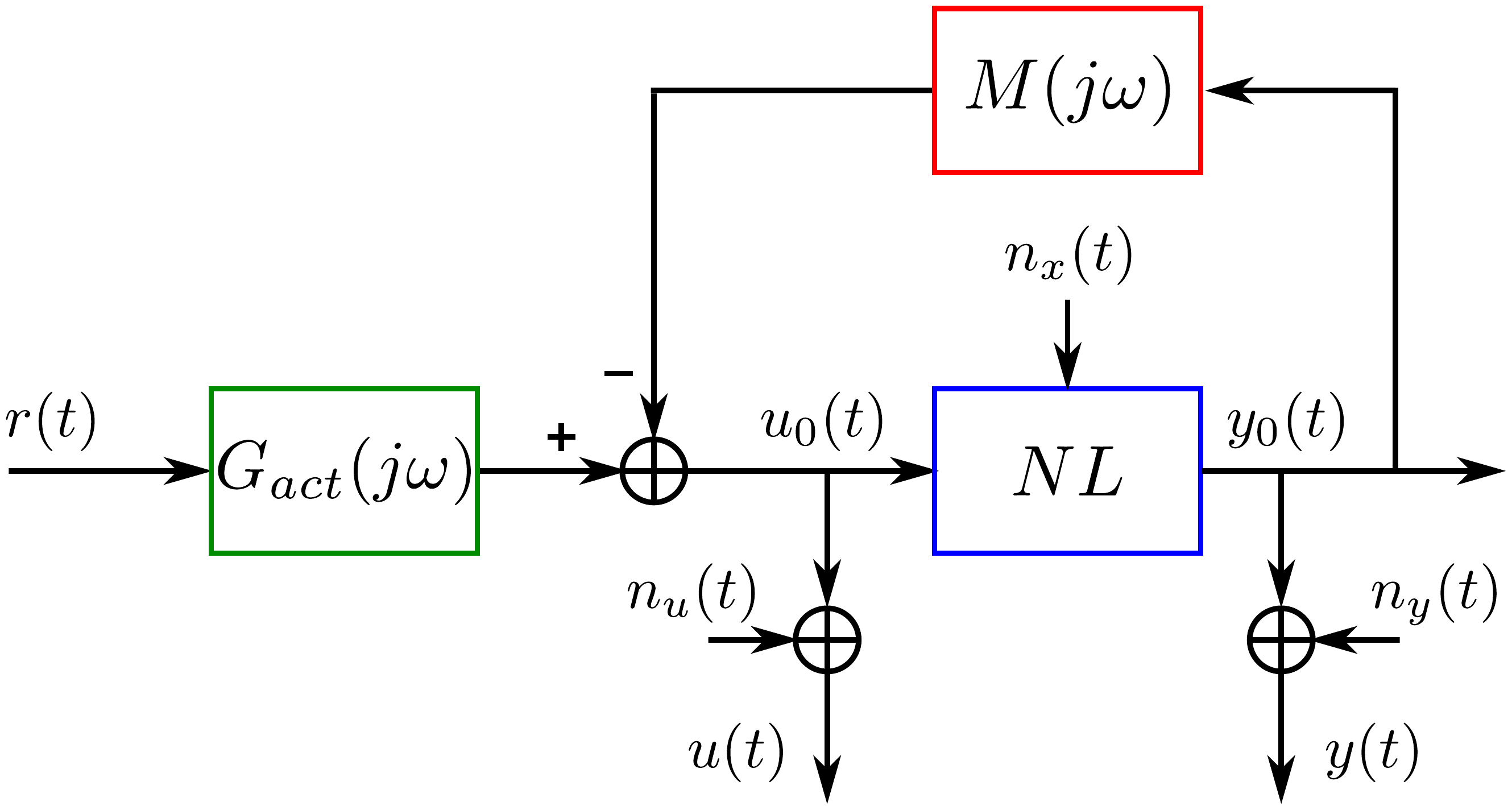}
		\caption{Setup for measuring the BLA of a nonlinear system with process noise operating in closed loop. The linear actuator and the feedback dynamics are represented by $G_{act}(j\omega)$ and $M(j\omega)$ respectively. $r(t)$ is the known reference signal, $u_0(t)$ and $y_0(t)$ are the noiseless input and output signal, $u(t)$ and $y(t)$ are the noisy input and output signal, and $n_u(t)$, $n_y(t)$ and $n_x(t)$ are the input measurement noise, output measurement noise and process noise respectively.}
		\label{fig:NonlinearFb}
	\end{figure}
		
	The classical open-loop BLA framework introduced in \citep{Schoukens1998} has been extended to systems operating in closed loop \citep{Pintelon2012a,Pintelon2013}. The generalized BLA applicable to the process noise presented in this paper is complementary with the closed-loop theory and can be similarly extended to closed loop systems. (see Figure~\ref{fig:NonlinearFb} for an overview of the setup).
	
	The closed-loop BLA is defined using the indirect frequency response function measurement method for linear feedback systems \citep{Wellstead1977,Wellstead1981}, it is based upon the open-loop relations from the reference signal to the system input and the system output. Not only the the relation from the reference signal $r(t)$ to the output signal $y(t)$, but also the relation from $r(t)$ to the input signal $u(t)$ is assumed to belong to the system class $\mathbb{S}_p$.
	
	Define:
	\begin{align}
		\bar{\bar{y}}(t) &= E_{n_x,n_y}\{ \tilde{y}(t)\}, \\
		\bar{\bar{u}}(t) &= E_{n_x,n_u}\{ \tilde{u}(t)\}.
	\end{align}
	Where $\tilde{u}(t)$ and $\tilde{y}(t)$ are now defined as:
	\begin{align}
		\tilde{u}(t) &= u(t) - E_{r,n_x,n_u}\{ u(t)\}, \\
		\tilde{y}(t) &= y(t) - E_{r,n_x,n_y}\{ y(t)\}.
	\end{align}
	
	The BLA of a nonlinear system with process noise operating in closed loop is now defined as:
	\begin{align}
		G_{bla}(j\omega) = \frac{S_{\bar{\bar{Y}}R}(j\omega)}{S_{\bar{\bar{U}}R}(j\omega)},
	\end{align}
	where $S_{\bar{\bar{Y}}R}(j\omega)$ and $S_{\bar{\bar{U}}R}(j\omega)$ are the reference-output and reference-input cross-power spectra respectively.
	
	The properties of the BLA of a system operating in feedback proven in \citep{Pintelon2012a,Pintelon2013} can easily be brought over to the process noise case using the properties proven in this paper.

\section{Estimating the BLA: the Robust Method} \label{sec:Estimation}

	The BLA $G_{bla}(j\omega)$ can be estimated both parametrically or nonparametrically, an extensive review of the available BLA estimation techniques is provided by \citep{Pintelon2012,Schoukens2016}. The presented methods remain valid in the process noise case. Some methods, such as the so-called robust method \citep{Pintelon2012,Schoukens2012,Schoukens2016}, can also provide an estimate of the noise variance and the total variance. This section first recapitulates the robust BLA estimation method, the behavior of the robust method is analyzed in detail for the process noise case next.
	
	\subsection{The Robust Method: Algorithm}
		The robust BLA estimation approach makes use of multiple periods and multiple realizations of a random phase multisine. The estimated BLA $\hat{G}_{bla}(j\omega)$ in open loop and with a known input is obtained as follows \citep{Pintelon2012,Schoukens2012,Schoukens2016}:
		\begin{align}
			\hat{G}^{[m,p]}(j\omega) &= \frac{Y^{[m,p]}(j\omega)}{U^{[m]}(j\omega)}, \\
			\hat{G}^{[m]}(j\omega)   &= \frac{1}{P} \sum_{p=1}^{P} \hat{G}^{[m,p]}(j\omega), \label{eq:GPer}\\ 
			\hat{G}_{bla}(j\omega) 	 &= \frac{1}{M} \sum_{m=1}^{M} \hat{G}^{[m]}(j\omega), \label{eq:GReal}
		\end{align}
	where $Y^{[m,p]}(j\omega)$ is the DFT of the $p$-th period and $m$-th realization of the output signal $y(t)$, $U^{[m]}(j\omega)$ is the $m$-th realization of the input signal. Since $u(t)$ is noise free, it is equal over all periods. The noise variance $\hat{\sigma}_{bla,n}^2(j\omega)$ (the variance on $\hat{G}_{bla}(j\omega)$ due to $y_n(t)$ in the output noise setting) and total variance $\hat{\sigma}_{bla,t}^2(j\omega)$ (the variance on $\hat{G}_{bla}(j\omega)$ due to $y_n(t) + y_s(t)$ in the output noise setting) estimate are given by:
		\begin{align}
			\hat{\sigma}_{bla,n}^2(j\omega) &= \frac{1}{M^2 P (P-1)} \sum_{m=1}^{M} \sum_{p=1}^{P} \left| \hat{G}^{[m]}(j\omega) - \hat{G}^{[m,p]}(j\omega) \right|^2, \label{eq:RobVarn} \\
			\hat{\sigma}_{bla,t}^2(j\omega) &= \frac{1}{M (M-1)} \sum_{m=1}^{M} \left| \hat{G}_{bla}(j\omega) - \hat{G}^{[m]}(j\omega) \right|^2. \label{eq:RobVart}
		\end{align}
		
		In order to quantify the variability of the mean BLA estimate \eqref{eq:GReal}, extra factors $M$ and $P$ have been introduced in the sample variances \eqref{eq:RobVarn} and \eqref{eq:RobVart}.
	
	\subsection{The Robust Method: Process Noise Analysis}
		The first step of the robust method takes the average of the output over the periods. Both the output noise contribution $y_n(t)$ and the process noise contribution $y_p(t)$ are aperiodic and zero-mean, while the stochastic nonlinear contribution is periodic. Hence, their contribution will be averaged out. In the second step, the average over the input realizations is taken. This step averages out the stochastic nonlinear contribution $y_s(t)$, but also the remaining contributions of the process noise and output noise $y_p(t)$ and $y_n(t)$. More formally we have that (see Figure~\ref{fig:BestLinearApproxPN}):
		\begin{align}
			&Y^{[m,p]}(j\omega) = \\
			&Y^{[m]}_{bla}(j\omega) + Y^{[m]}_{s}(j\omega) + Y^{[m,p]}_{p}(j\omega) + Y^{[m,p]}_{n}(j\omega), \nonumber
		\end{align}
		where $Y^{[m]}_{s}(j\omega)$, $Y^{[m,p]}_{p}(j\omega)$ and $Y^{[m,p]}_{n}(j\omega)$ are the DFT of the period $p$ and realization $m$ of the signals $y_{s}(t)$, $y_{p}(t)$ and $y_{n}(t)$ respectively.
		This results in the following expression for $\hat{G}^{[m,p]}(j\omega)$:
		\begin{align}
			&\hat{G}^{[m,p]}(j\omega)  \label{eq:BLAmp}\\
			&= \frac{Y^{[m]}_{bla}(j\omega)}{U^{[m]}(j\omega)} + \frac{Y^{[m]}_{s}(j\omega)}{U^{[m]}(j\omega)} + \frac{Y^{[m,p]}_{p}(j\omega) + Y^{[m,p]}_{n}(j\omega)}{U^{[m]}(j\omega)}, \nonumber \\
			&= G_{bla}(j\omega) + \frac{Y^{[m]}_{s}(j\omega)}{U^{[m]}(j\omega)} + \frac{Y^{[m,p]}_{p}(j\omega) + Y^{[m,p]}_{n}(j\omega)}{U^{[m]}(j\omega)}. \nonumber
		\end{align}
		
		A closer analysis, analogous to \citep{Pintelon2012}, of the expected value of eq.~\eqref{eq:GReal}, \eqref{eq:RobVarn}, \eqref{eq:RobVart} for the process noise case results in:
		\begin{align}
			E\{\hat{G}_{bla}(j\omega)\} &= G_{bla}(j\omega),\\
			E\{\hat{\sigma}_{bla,n}^2(j\omega)\} &=  \frac{\sigma^2_{n}(j\omega) + \sigma^2_{p}(j\omega)}{MP |U(j\omega)|^2},\\
			E\{\hat{\sigma}_{bla,t}^2(j\omega)\} &=  \frac{\sigma^2_{s}(j\omega)}{M |U(j\omega)|^2}+\frac{\sigma^2_{n}(j\omega) + \sigma^2_{p}(j\omega)}{MP |U(j\omega)|^2},
		\end{align}
		where $\sigma^2_{n}(j\omega)$, $\sigma^2_{p}(j\omega)$ and $\sigma^2_{s}(j\omega)$ are the variances of $Y^{[m,p]}_{n}(j\omega)$, $Y^{[m,p]}_{p}(j\omega)$, $Y^{[m]}_{s}(j\omega)$ respectively, and where $|U(j\omega)|^2$ is independent of the random phase realization. The expectations are taken with respect to the input realization, process noise realization and output noise realization.
		
		The robust BLA estimation method is still valid in the process noise case. However, the estimated BLA now depends on the process noise properties (see Section~\ref{sec:BlaPN}), and the estimated variance due to noise and the total variance on the BLA have an extra term which is process noise dependent. The variance due to the process noise contribution $y_p(t)$ and the output noise contribution $y_n(t)$ cannot be separated using the robust method. Note, however, that the presence of process noise in a nonlinear system can be detected using nonstationary input signals \citep{Zhang2017}.
	
\section{Example: A Hammerstein System} \label{sec:Simulation}
	
	\subsection{System}
		\begin{figure}
			\centering
				\includegraphics[width=1\columnwidth]{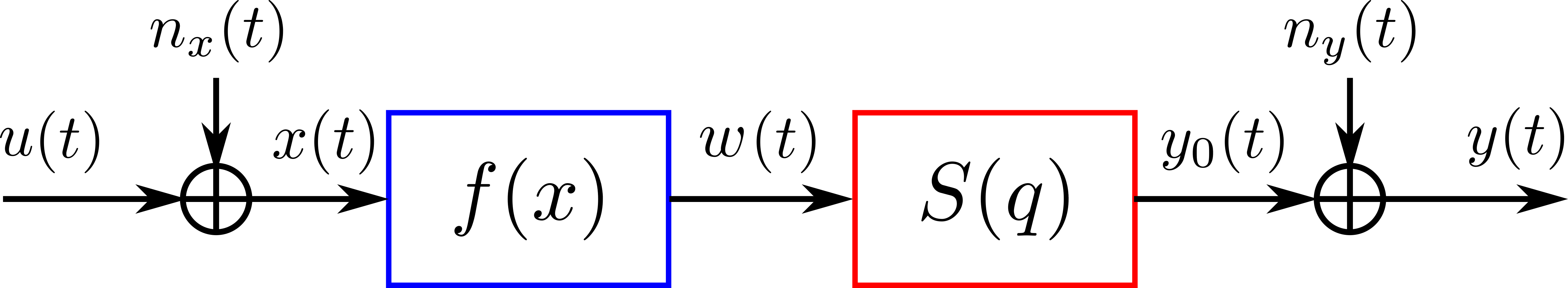}
			\caption{A Hammerstein system with process noise.}
			\label{fig:Example}
		\end{figure}
		
		Consider the following Hammerstein system (see Figure~\ref{fig:Example}), with  $f(x) = x + 0.1x^3$:
		\begin{align}
			y(t) &= S(q)\left[f(u(t)+n_x(t))\right] + n_y(t) \\
					 &= S(q)\left[u(t) + n_x(t) + 0.1u(t)^3 + 0.3u(t)^2 n_x(t)  \right. \nonumber \\
					 & \: \left. + 0.3u(t) n_x(t)^2 + 0.1n_x(t)^3\right] + n_y(t), 
		\end{align}
		Where $u(t)$, $n_x(t)$ and $n_y(t)$ are zero-mean white Gaussian signals with standard deviations $\sigma_u$, $\sigma_{n_x}$ and $\sigma_{n_y}$ respectively.
	
	\subsection{Theoretical Analysis}
		$\tilde{u}(t) = u(t)$ and $\tilde{y}(t) = y(t)$ since the input $u(t)$ has zero-mean and the static nonlinearity is odd. $\bar{y}(t)$ and $\bar{\bar{y}}(t)$ are given by:
		\begin{align}
			\bar{y}(t) &= E_{n_y}\{y(t)\}  \nonumber \\
								 &= S(q)\left[u(t) + n_x(t) + 0.1u(t)^3 + 0.3u(t)^2 n_x(t)  \right. \nonumber \\
								 & \: \: \left. + 0.3u(t) n_x(t)^2 + 0.1n_x(t)^3\right],  \\
			\bar{\bar{y}}(t) &= E_{n_y,n_x}\{y(t)\} \nonumber \\
											 &= S(q)\left[u(t) + 0.1u(t)^3 + 0.3u(t) \sigma_{n_x}^2\right].
		\end{align}	 
		
		Since the input of the static nonlinearity is Gaussian, Bussgang's Theorem can be applied \citep{Bussgang1952}, i.e. the BLA of a static nonlinearity is a static gain depending on the variance of the input $u(t)$ and the process noise $n_x(t)$. Based on the results in \citep{Enqvist2010,Giordano2016} we obtain:
		\begin{align}
			G_{bla}(q) &= S(q) (1 + 0.3 \sigma_u^2 + 0.3 \sigma_{n_x}^2).
		\end{align}
		The BLA constituents $y_{bla}(t)$, $y_{s}(t)$, $y_{p}(t)$, $y_{n}(t)$ are given by:
		\begin{align}
			y_{bla}(t) &= S(q)\left[(1 + 0.3 \sigma_u^2 + 0.3 \sigma_{n_x}^2)u(t)\right], \nonumber \\
			y_{s}(t) 	 &= S(q)\left[0.1u(t)^3 - 0.3\sigma_u^2 u(t)\right],\\
			y_{p}(t) 	 &= S(q)\left[n_x(t) + 0.3u(t)^2 n_x(t) \right. \nonumber \\
								 & \: \: \left. + 0.3u(t) (n_x(t)^2 - \sigma_{n_x}^2) + 0.1n_x(t)^3\right], \nonumber \\
			y_{n}(t) 	 &= n_{y}(t). \nonumber
		\end{align}
		It can easily be observed that the properties that are derived in Section~\ref{sec:properties} are valid for this case study. It can also be observed that the BLA does not only depend on the input signal properties, but also on the properties of the disturbing process noise (as it is also the case for the BLA in the feedback framework \citep{Pintelon2013}). It is illustrated in Figure~\ref{fig:BlaRobustChange}, for the Hammerstein case considered here, that the gain of the BLA depends on the variance of the process noise. The process noise contribution $y_p(t)$ on the other hand does not only depend on the process noise $n_x$(t), but also on the input signal $u(t)$.
		
		\begin{figure}
			\centering
				\includegraphics[width=1\columnwidth]{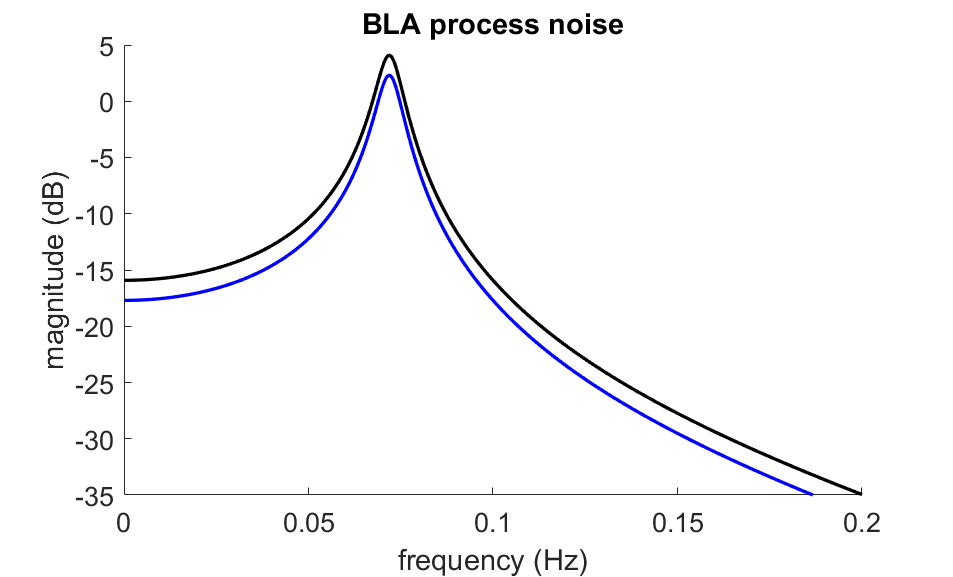}
			\caption{Dependency of the BLA on the process noise: the BLA is computed analytically with a process noise standard deviation $\sigma_{n_x}$ of 0.1 (blue) and 1 (black), while the input standard deviation $\sigma_{u}$ is fixed equal to 1. A clear gain increase can be observed.}
			\label{fig:BlaRobustChange}
		\end{figure}
		
		Note that the chosen definition of the process noise contribution $y_p$ and the output noise contribution $y_n$ are not unique (see eq. \eqref{eq:procDef} and \eqref{eq:outDef}). An alternative set of definitions $\check{y}_p$ and $\check{y}_n$ could be to assign all the noise terms depending on the input $u$ to the process noise contribution, and assign all the noise terms independent of the input $u$ to the output noise contribution, resulting in the following expressions for $\check{y}_p$ and $\check{y}_n$ in this example:
		\begin{align}
			\check{y}_{p}(t) &= S(q)\left[0.3u(t)^2 n_x(t) + 0.3u(t) (n_x(t)^2 - \sigma_{n_x}^2)\right], \nonumber \\
			\check{y}_{n}(t) &= n_{y}(t)+ S(q)\left[n_x(t) + 0.1n_x(t)^3\right].
		\end{align}
		However, the original definitions have the merit of being simple extension of the definitions used in the output noise open-loop and closed-loop setting, based on taking the expected value with respect to the output noise $n_y$ and the process noise $n_x$. Note as well that with the chosen definitions the output noise contribution only contains terms due to the output noise, while this is not the case using the alternative definition. For these reasons the authors have chosen to use the definitions that are expressed in eq. \eqref{eq:procDef} and \eqref{eq:outDef}. 
	
	\subsection{Robust Method Results}
		This section illustrates how the robust method can be used to estimate the BLA in a process noise setting. The experimentally obtained BLA, total variance and noise variance are compared with the analytically derived total and noise variance. Note that the robust approach does not require any knowledge of the system, while a full knowledge of the system and the (noise) signals distribution is required for the analytical derivation. A total of $M=10$ realizations is used, each containing 2 steady-state periods of 4096 points per period. The standard deviation of the input signal and noise signals are $\sigma_u = 1$, $\sigma_{n_x}=0.1$ and $\sigma_{n_y}=0.03$.
		
		The BLA $\hat{G}_{bla}(j\omega)$ and the variances $\hat{\sigma}_{bla,t}^2(j\omega)$, $\hat{\sigma}_{bla,n}^2(j\omega)$ obtained with the robust BLA estimation method coincide perfectly with their analytical counterparts as can be seen in Figure~\ref{fig:BlaRobust}. Note that the robust approach cannot distinguish between the process noise and the output noise variance, what is shown here is the total variance of the BLA due to both the process noise and the output noise.
	
		\begin{figure}
			\centering
				\includegraphics[width=1\columnwidth]{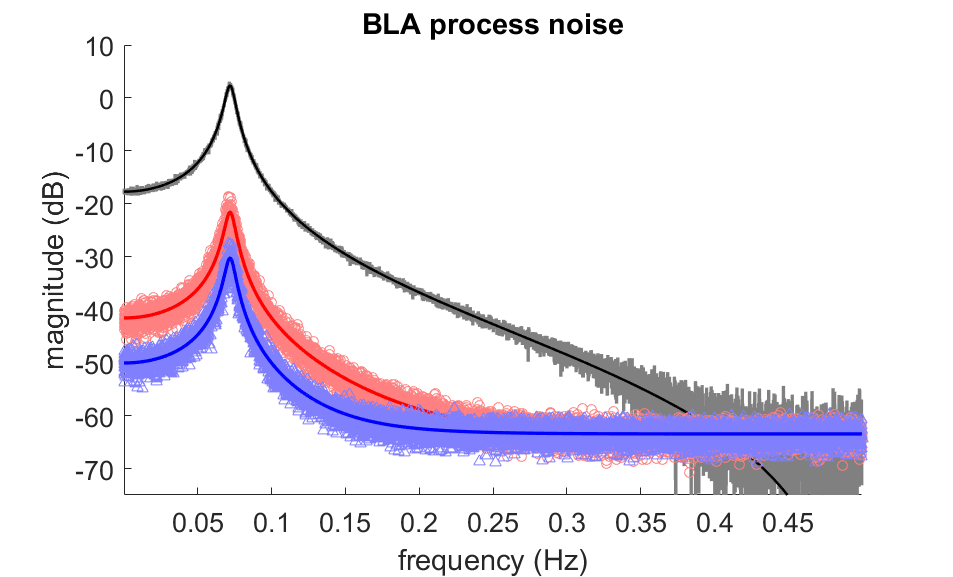}
			\caption{The estimated BLA versus the theoretically obtained BLA. The obtained BLA $G_{bla}(j\omega)$ is shown in black (analytical) and gray (robust method), the total variance on the estimated BLA is shown in light (robust method) and dark (analytical) red, while the noise variance is shown in light (robust method) and dark (analytical) blue. A good match between the analytical BLA expression and the one obtained experimentally using the robust method can be observed.}
			\label{fig:BlaRobust}
		\end{figure}

\section{Conclusion} \label{sec:conclusion}
	The Best Linear Approximation framework is extended to the process noise case, both for the open-loop and the closed-loop setting. The process noise acts as a second input of a Volterra system, resulting in a very general process noise framework. It is proven that the stochastic nonlinear contributions and the process noise contribution are zero-mean and uncorrelated with the input. It is also illustrated that the BLA can depend of the properties on the process noise, and that both the process noise contribution and the stochastic nonlinear distortion are uncorrelated but not independent of the input excitation. The Best Linear Approximation, together with the total and the noise variance can be obtained using the robust estimation method in the case of process noise.

\begin{ack} 
	 This work was supported in part by the Fund for Scientific Research (FWO-Vlaanderen), the Methusalem grant of the Flemish Government (METH-1), by the Belgian Government through the Inter university Poles of Attraction IAP VII/19 DYSCO program, and the ERC advanced grant SNLSID, under contract 320378. Maarten Schoukens is supported by the H2020 Marie Sklodowska-Curie European Fellowship. The project leading to this application has received funding from the European Union's Horizon 2020 research and innovation programme under the Marie Sklodowska-Curie grant agreement Nr 798627.
\end{ack}	

\bibliographystyle{plainnat}        													 
\bibliography{../../../References/ReferencesLibraryV2}  
		
\end{document}